\newtheorem{thm}{Theorem}
\newtheorem{defn}{Definition}
\newtheorem{lemma}{Lemma}
\newtheorem{pro}{Proposition}
\newtheorem{rk}{Remark}
\numberwithin{equation}{section} \setcounter{tocdepth}{1}
\newcommand{\bea}{\begin{eqnarray}}
\newcommand{\eea}{\end{eqnarray}}
\def\b{\beta}
\begin{document}
\title[On a new class of Gibbs measures of the Ising model]{Ising model on Cayley trees: a new class of Gibbs measures and their comparison with known ones}

\author{M.M. Rakhmatullaev,  U. A. Rozikov}

\address{M. \ M. \ Rakhmatullaev and U.\ A.\ Rozikov\\ Institute of mathematics,
29, Do'rmon Yo'li str., 100125, Tashkent, Uzbekistan.}
\email {mrahmatullaev@rambler.ru\ \   rozikovu@yandex.ru}

\begin{abstract}
For the Ising  model on Cayley trees we give a very wide class of new Gibbs measures. We show that these new measures are extreme
under some conditions on the temperature. We give a review of all known Gibbs measures of the Ising model on trees
and compare them with our new measures.
 \end{abstract}
\maketitle

{\bf Mathematics Subject Classifications (2010).} 82B26 (primary);
60K35 (secondary)

{\bf{Key words.}}  Ising model, Cayley tree,
Gibbs measure, boundary condition.

\section{Introduction}

The well known nearest neighbors (n.n.) Ising model on the Cayley tree still offers new interesting phenomenon (see e.g.
 \cite{R}, \cite{GHRR} and \cite{GRS} for recent results). Here we widely extend the set of known Gibbs measures of this model.

The Cayley tree $\Gamma^k$
of order $ k\geq 1 $ is an infinite tree, i.e., a connected graph without
cycles, such that exactly $k+1$ edges originate from each vertex.
Let $\Gamma^k=(V, L)$ where $V$ is the set of vertices and  $L$ the set of edges.
Two vertices $x$ and $y$ are called {\it nearest neighbors} if there exists an
edge $l \in L$ connecting them.
We will use the notation $l=\langle x,y\rangle$.
A collection of distinct nearest neighbor pairs $\langle x,x_1\rangle, \langle x_1,x_2\rangle,...,\langle x_{d-1},y\rangle$ is called a {\it path} from $x$ to $y$. The distance $d(x,y)$ on the Cayley tree is the number of edges of the shortest path from $x$ to $y$.

For a fixed $x^0\in V$, called the root, we set
\begin{equation*}
W_n=\{x\in V\,| \, d(x,x^0)=n\}, \qquad V_n=\bigcup_{m=0}^n W_m
\end{equation*}
and denote
$$
S(x)=\{y\in W_{n+1} :  d(x,y)=1 \}, \ \ x\in W_n, $$ the set  of {\it direct successors} of $x$.

 The n.n. Ising model  is then  defined by the
 formal Hamiltonian
\begin{equation}
\label{h}
H(\sigma)=-J\sum_{\langle x,y\rangle\subset V}\sigma(x)\sigma(y).
\end{equation}
Here the first sum runs over   n.n. vertices
$\langle x,y\rangle$, the spins $\sigma(x)$ take values $\pm 1$, and the real parameter  $J$ stands for the interaction energy.

 The (finite-dimensional) Gibbs distributions over configurations    at
 inverse temperature   $\beta=1/T$
 are defined by
 \begin{equation}\label{*}
\mu_n(\sigma_n)=Z^{-1}_n(h)
\exp\Big\{\beta J\sum_{\langle x,y\rangle\subset V_n} \sigma(x)\sigma(y)
+\sum_{x\in W_n}h_x\sigma(x)\Big\}
\end{equation}
  with partition functions given by
 \begin{equation}\label{pf}
Z_n(h)
=
\sum_{\sigma_n }
\exp\Big\{\beta J
\sum_{\langle x,y\rangle\subset V_n} \sigma(x)\sigma(y)
+\sum_{x\in W_n}h_x\sigma(x)\Big\}.
\end{equation}
Here the spin configurations $ \sigma_n $ belong to $\{-1,+1\}^{V_n}$
and
\begin{equation}
h=\{h_x\in \mathbb R, \ \ x\in V\}
\end{equation}
is a collection of real numbers that stands for (generalized) boundary condition.

The probability distributions (\ref{*}) are said compatible if for all
$\sigma_{n-1}$
\begin{equation}\label{**}
\sum_{\omega_n}\mu_n(\sigma_{n-1}, \omega_n)=\mu_{n-1}(\sigma_{n-1})
\end{equation}
where the configurations
$\omega_n$ belong to $\{-1,+1\}^{W_n}$.

It is well known (see Chapter 2 of \cite{R} for a detailed proof)
 that this compatibility condition is satisfied if and only if for any $x\in V$
the following equation holds
\begin{equation}\label{***}
 h_x=\sum_{y\in S(x)}f_{\theta}(h_y),
\end{equation}
where
\begin{equation}
\label{****}
\theta=\tanh(\b J), \quad f_{\theta}(h)
={\rm arctanh}(\theta\tanh h).
\end{equation}

Namely,   for any boundary condition
satisfying the functional equation (\ref{***}) there exists a unique Gibbs measure, the correspondence being one-to-one.

A boundary condition satisfying (\ref{***}) is called \emph{compatible}.

The paper is organized as follows.  The results are given in Section~2.
Section 3 contains a review of all known Gibbs measures of the Ising model on Cayley trees and their comparison with the new measures of this paper.
Proofs are given in Section~4.

\section{Results}

Here we consider the half tree.
Namely the root $x^0$
has $k$ nearest neighbors.

We construct below new solutions of the functional equation (\ref{***}).
Consider the following matrix
$$M=\left(\begin{array}{cccc}
a_1& a_2& a_3& a_4\\
a_2& a_1& a_4& a_3\\
b_1& b_2& b_3& b_4\\
b_2& b_1& b_4& b_3\\
\end{array}\right),$$
where $a_i, b_j$ are non-negative integers and
\begin{equation}\label{ab}
a_1+a_2+a_3+a_4=k, \ \ \ b_1+b_2+b_3+b_4=k.
\end{equation}
This matrix defines the numbers of values $\pm h,$ and $\pm l$ in the set $S(x)$ for
each $h_x\in \{\pm h, \pm l\}$. More precisely, the boundary condition
$h=\{h_x, x\in V\}$ with fields taking values $\pm h, \pm l$ defined by the following steps:

\begin{itemize}
\item[(i)]
if at vertex $x$ we have $h_x=h$, then the function has values
\begin{equation*}
\left\{\begin{array}{cl}
h &  {\rm on}    \quad  a_1  \quad  {\rm vertices  \ of }
\quad S(x);
\\[3mm]
-h & {\rm on} \ \ a_2 \ \  {\rm of \ remaining \ vertices};
\\[3mm]
l & {\rm on}  \ \ a_3 \ \ {\rm of \ remaining \ vertices};\\[3mm]
-l & {\rm on}  \ \ a_4 \ \ {\rm of \ remaining \ vertices}.
\end{array}
\right.
\end{equation*}
\item[(ii)]
if at vertex $x$ we have $h_x=l$, then the function has values
\begin{equation*}
\left\{\begin{array}{cl}
h &  {\rm on}    \quad  b_1  \quad  {\rm vertices  \ of }
\quad S(x);
\\[3mm]
-h & {\rm on} \ \ b_2 \ \  {\rm of \ remaining \ vertices};
\\[3mm]
l & {\rm on}  \ \ b_3 \ \ {\rm of \ remaining \ vertices};\\[3mm]
-l & {\rm on}  \ \ b_4 \ \ {\rm of \ remaining \ vertices}.
\end{array}
\right.
\end{equation*}
\end{itemize}
If at vertex $x$ we have $h_x=-h$ (resp. $-l$) then we multiply the above formulas to $-1$.
(See Fig.\ref{f1} for an example of such function.)
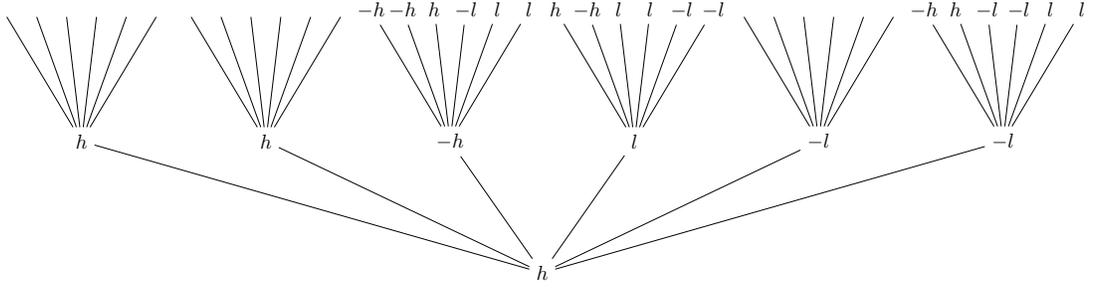
\begin{figure}
\scalebox{.7}{
\begin{tikzpicture}[level distance=2.5cm,
level 1/.style={sibling distance=3.5cm},
level 2/.style={sibling distance=.6cm}]
\node {$h$} [grow'=up]
	child {node {$h$}
		child foreach \name in {0,0,0,0,0,0} { node {$ $} }
	}
	child {node {$h$}
		child foreach \name in {0,0,0,0,0,0} { node {$ $} }
	}
	child {node {$-h$} 
		child foreach \name in {-h,-h,h,-l,l,l} { node {$\name$} }
	}
	child {node {$l$}
		child foreach \name in {h,-h,l,l,-l,-l} { node {$\name$} }
	}
	child {node {$-l$}
		child foreach \name in {0,0,0,0,0,0} { node {$ $} }
	}
	child {node {$-l$}
		child foreach \name in {-h,h,-l,-l,l,l} { node {$\name$} }
	}
;
\end{tikzpicture}
}
\caption{\footnotesize \noindent
In this figure the values of function $h_x$ on the vertices of the Cayley tree of order 6 are shown.
This is the case when $a_1=a_4=2$, $a_2=a_3=1$, $b_1=b_2=1$, $b_3=b_4=2$. The picture shows all four possible rules to put values
of function $h_\cdot$ on the set $S(x)$ when one knows the value of $h_x$ at $x$.}\label{f1}
\end{figure}

It is easy to see that the boundary conditions in the above construction are compatible iff $h$ and $l$ satisfy the following system of equations:
\begin{equation}\label{h12}
\left\{\begin{array}{ll}
h=(a_1-a_2)f_{\theta}(h)+(a_3-a_4)f_\theta(l)\\[3mm]
l=(b_1-b_2)f_{\theta}(h)+(b_3-b_4)f_\theta(l),
\end{array}
\right.
\end{equation}
where $a_i$ and $b_i$ are given in matrix $M$.

Denote
\begin{equation}\label{abcd}
a=a_1-a_2, \ \ b=a_3-a_4, \ \ c=b_1-b_2, \ \ d=b_3-b_4.
\end{equation}
By condition (\ref{ab}) we have $a,b,c,d\in \{-k,-k+1,\dots,k-1,k\}$.
Then the system (\ref{h12}) has the form
\begin{equation}\label{hl}
\left\{\begin{array}{ll}
h=af_{\theta}(h)+bf_\theta(l)\\[3mm]
l=cf_{\theta}(h)+df_\theta(l).
\end{array}
\right.
\end{equation}
\begin{thm}\label{t1}
Independently of the parameters the system of equations (\ref{hl}) has solution $(0,0)$, and if $|(bc-ad)\theta^2+(a+d)\theta|>1$ then there are at least three distinct solutions $(0,0)$,  $(\pm h_*, \pm l_*)$, where $h_*>0, l_*>0$.
\end{thm}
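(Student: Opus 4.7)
The solution $(0,0)$ is immediate: since $f_\theta(0) = \mathrm{arctanh}(0) = 0$, both right-hand sides of (\ref{hl}) vanish at the origin.

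For the nontrivial part I would recast (\ref{hl}) as a fixed-point problem for the map
\[
F(h,l) = \bigl(af_\theta(h) + bf_\theta(l),\ cf_\theta(h) + df_\theta(l)\bigr),
\]
which is continuous, odd ($F(-h,-l)=-F(h,l)$) and bounded, since $|f_\theta| \le \mathrm{arctanh}|\theta|$. Its Jacobian at the origin is $DF(0,0) = \theta M$ with $M = \bigl(\begin{smallmatrix} a & b \\ c & d\end{smallmatrix}\bigr)$, and a short computation gives
\[
\det\bigl(I - DF(0,0)\bigr) = 1 - (a+d)\theta + (ad-bc)\theta^2 = 1 - \bigl[(bc-ad)\theta^2 + (a+d)\theta\bigr],
\]
so the hypothesis $|q|>1$, with $q := (bc-ad)\theta^2 + (a+d)\theta$, is precisely the statement that $\det(I - DF(0,0))$ lies outside $[0,2]$.

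The workhorse is a Brouwer degree argument on $g(h,l) := (h,l) - F(h,l)$. Boundedness of $F$ ensures that for all sufficiently large $R$ the straight-line homotopy $(h,l) \mapsto (h,l) - sF(h,l)$, $s \in [0,1]$, does not vanish on $\partial B_R$, so $\deg(g, B_R, 0) = 1$. In the regime $q > 1$ the local index of $g$ at the origin equals $\operatorname{sign}\det g'(0,0) = -1$, so the additional zeros of $g$ must contribute total index $+2$; the oddness of $F$ assembles them into an antipodal pair $(h_*, l_*), (-h_*, -l_*)$ of nontrivial solutions. The case $q < -1$, where the local index at the origin is still $+1$, I would reduce to the previous one by passing to the second iterate $F\circ F$ (whose linearization at $0$ has eigenvalues $\lambda_i^2$), or equivalently by identifying a real eigenvalue of $DF(0,0)$ of modulus larger than $1$ and running the same degree computation on the one-dimensional invariant line it generates.

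The main obstacle I expect is the positivity assertion $h_* > 0,\ l_* > 0$: degree theory alone produces an antipodal pair of nontrivial solutions but does not fix their sign pattern. I would close this gap by realising $(h_*, l_*)$ as the endpoint of a continuous branch of solutions bifurcating from $(0,0)$ tangentially to an unstable eigenvector of $DF(0,0)$; that eigenvector may be normalised to lie in the open positive quadrant (after possibly exchanging $(h_*, l_*)$ with $(-h_*, -l_*)$), and the oddness of $F$ together with the monotonicity and concavity of $f_\theta$ on $[0,\infty)$ should keep the branch inside that quadrant.
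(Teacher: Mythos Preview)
Your approach is genuinely different from the paper's. The paper never works with the two-dimensional map $F$ or with degree: instead it eliminates one variable. When $ab\ne 0$ it solves the first equation for $f_\theta(l)$, substitutes into the second to obtain $l=\varphi(h)$, and reduces everything to the scalar equation $h=\psi(h)$ with $\psi(h)=af_\theta(h)+bf_\theta(\varphi(h))$ and $\psi'(0)=q$. The remaining cases ($a=0$ or $b=0$) are handled separately by the same kind of one-dimensional reduction. Existence of non-zero fixed points is then argued by monotone iteration: $\psi$ is odd and bounded, and if $\psi'(0)>1$ the iterates from a small positive seed increase to a positive limit $h_*$. This is more elementary than Brouwer degree and, in the decoupled case $b=0$, actually yields up to nine solutions rather than three.

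There is, however, a real gap in your treatment of $q<-1$, and your two proposed repairs do not work. Passing to $F\circ F$ produces fixed points of the second iterate, which in general are period-$2$ orbits of $F$ rather than solutions of (\ref{hl}); nothing forces them to be genuine fixed points. Your alternative, finding a \emph{real} eigenvalue of $DF(0,0)=\theta M$ with modulus larger than $1$ and working on its eigenline, can fail outright: for instance with $a=d=1$, $b=-c=3$ (admissible for any $k\ge 4$) the eigenvalues of $\theta M$ are $\theta(1\pm 3i)$, hence non-real, while $q=-10\theta^2+2\theta<-1$ for $\theta$ close to $1$. In that regime $\det(I-DF(0,0))=1-q>0$, so the index at the origin is $+1$, matching the global degree, and your argument gives no further zeros. (It is worth noting that the paper's own monotone-iteration step, as written, also only covers $\psi'(0)>1$; the case $\psi'(0)<-1$ is asserted ``similarly'' but the iterates then alternate in sign and need not converge to a fixed point.)

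On positivity you are right to flag a difficulty, and your bifurcation sketch does not close it: the unstable eigenvector of $\theta M$ need not lie in the open positive quadrant (take $b<0$, for example), so the branch can leave the quadrant immediately. The paper itself does not prove $h_*>0$ and $l_*>0$ simultaneously --- in its Case~2 the nontrivial solutions are $(\pm b f_\theta(l_*),\pm l_*)$ with $l_*>0$, whose first coordinate has the sign of $b$ --- so the clause ``$h_*>0,\ l_*>0$'' in the statement should be read as labeling one member of the antipodal pair rather than as a sign constraint on both coordinates.
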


As it was mentioned above,  for any boundary condition
satisfying the functional equation (\ref{***}) there exists a unique Gibbs measure, thus
by solutions $(h,l)$ mentioned in Theorem \ref{t1}, we can construct new Gibbs measures, denoted by $\mu_{h, l}$.
These measures also depend on the choice of the value of the root, and
 differ in cases of non-uniqueness of Theorem \ref{t1}.

\begin{thm}\label{t2} Let $\theta>0$ (i.e., $J>0$, the ferromagnetic Ising model) then
\begin{itemize}
\item[1.] If $hl=0$ then corresponding measure $\mu_{h,l}$ is extreme for $\theta\in ({1\over k}, {1\over \sqrt{k}})$, where the measure $\mu_{h,l}$ exists;

\item[2.] The measures $\mu_{h,l}$, with $h>0$, $l>0$ are extreme as soon as they exist.
\end{itemize}
 \end{thm}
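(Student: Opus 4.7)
\textbf{Proof plan for Theorem \ref{t2}.}
The overall strategy is to invoke the standard sufficient criterion for extremality of tree-indexed Gibbs measures: the measure $\mu_{h,l}$ associated to a compatible boundary condition is extreme once the corresponding broadcasting process on $\Gamma^k$ is non-reconstructible, a property which for the Ising model reduces to a Kesten--Stigum type spectral bound on the linearization of the recursion (\ref{hl}) at the fixed point. Writing $F(u,v)=\bigl(af_\theta(u)+bf_\theta(v),\,cf_\theta(u)+df_\theta(v)\bigr)$, the Jacobian at a fixed point $(h_*,l_*)$ is
\[
J(h_*,l_*)=\begin{pmatrix} af_\theta'(h_*) & bf_\theta'(l_*)\\ cf_\theta'(h_*) & df_\theta'(l_*)\end{pmatrix},
\]
and the criterion I would use is $\rho(J(h_*,l_*))<1$, where $\rho$ denotes the spectral radius.

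For part (1), when $hl=0$ (say $h=0$), the first line of (\ref{hl}) forces $bf_\theta(l_*)=0$, so non-triviality $l_*\ne 0$ gives $b=0$; the system thus decouples to the scalar Ising recursion $l=df_\theta(l)$ on $\Gamma^k$ with effective branching $d\in\{-k,\dots,k\}$. A non-trivial root exists iff $|d|\theta>1$, which forces $\theta>1/k$ since $|d|\le k$; the resulting Jacobian is triangular and $\rho(J(0,l_*))=|d|\cdot|f_\theta'(l_*)|\le k\theta$, so that the Kesten--Stigum bound $k\theta^2<1$ pins down the extremality interval $\theta\in(1/k,1/\sqrt{k})$. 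The case $l=0$ is symmetric.

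For part (2), with $h,l>0$, I would exploit strict concavity of $f_\theta$ on $(0,\infty)$ to get $0<f_\theta'(h_*)<f_\theta'(0)=\theta$ and similarly for $l_*$. The existence threshold $|(bc-ad)\theta^2+(a+d)\theta|>1$ from Theorem \ref{t1} is exactly a bifurcation condition for $J(0,0)=\theta\bigl(\begin{smallmatrix}a&b\\c&d\end{smallmatrix}\bigr)$, so $(h_*,l_*)$ emerges precisely as an eigenvalue of $J(0,0)$ crosses the unit circle. Passing from $J(0,0)$ to $J(h_*,l_*)$ replaces each factor $\theta$ by the strictly smaller $|f_\theta'(\cdot)|$, and a Perron--Frobenius comparison on the entrywise moduli yields $\rho(J(h_*,l_*))<1$ on the branch emanating from the bifurcation, from which extremality follows via the criterion above.

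The main obstacle lies in the last step of part (2). Because $a,b,c,d$ may take either sign, the comparison with $J(0,0)$ must be performed on the entrywise modulus $|J|$ and needs either an irreducibility assumption or a case analysis split by the sign pattern of $(b,c)$; moreover, one must argue that every existing $(h_*,l_*)$ with $h_*,l_*>0$ actually lies on the branch continuously connected to the bifurcation point, thereby ruling out secondary saddle-node branches on which the Jacobian might have spectral radius back above $1$. Making this precise uniformly over every admissible matrix $M$, possibly via an implicit-function deformation along the branch combined with the bound $|f_\theta(u)|\le\min(\theta|u|,1)$, is the technically delicate heart of the argument.
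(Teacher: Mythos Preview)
The criterion you invoke is not the right one. You claim extremality follows from $\rho\bigl(J(h_*,l_*)\bigr)<1$, where $J$ is the Jacobian of the two-dimensional map built from (\ref{hl}). But stability of a fixed point of the reduced recursion is not equivalent to non-reconstruction on the tree. The free measure $\mu_0$ already shows this: for $\theta\in(1/k,1/\sqrt{k})$ it is extreme, yet $h=0$ is an \emph{unstable} fixed point of $h\mapsto kf_\theta(h)$ since $kf_\theta'(0)=k\theta>1$. Your own Part~1 runs into exactly this: with $h=0$ and $b=0$ the Jacobian is lower triangular with diagonal entries $a\theta$ and $df_\theta'(l_*)$, so $\rho\bigl(J(0,l_*)\bigr)=\max\bigl(|a|\theta,\,|d|\,|f_\theta'(l_*)|\bigr)$, not merely $|d|\,|f_\theta'(l_*)|$; the first of these can exceed~$1$ throughout the interval in which the theorem asserts extremality. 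Your subsequent appeal to ``the Kesten--Stigum bound $k\theta^2<1$'' is therefore disconnected from the spectral quantity you actually compute, and $\rho(J)<1$ neither implies it nor follows from it.

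The paper proceeds along a different line. It uses the disagreement-percolation criterion of Martinelli--Sinclair--Weitz: a tree Gibbs measure is extreme once $k\kappa\gamma<1$, with $\kappa,\gamma$ the maximal one-step total-variation distances defined there. Both are bounded by $K_\beta(R_z)$ with $R_z=e^{-2h_z}$, and $K_\beta$ is maximized at $R_z=1$ with value~$\theta$. For Part~1 some $h_z=0$, so one can only say $\kappa,\gamma\le\theta$, and the condition becomes $k\theta^2<1$. For Part~2 every $h_z\ne0$; one keeps the crude bound $\gamma\le\theta$ but sharpens $\kappa$ via an identity expressing $K_\beta(A)$ (with $A=e^{2h}$) in terms of the one-variable map $x\mapsto\bigl(\tfrac{\alpha+x}{1+\alpha x}\bigr)^k$, together with the comparison $A\le e^{2h^*}$ where $h^*$ is the translation-invariant fixed point, to conclude $\kappa\le 1/k$. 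Hence $k\kappa\gamma\le\theta<1$ for all $\theta\in(0,1)$. This pointwise magnetization estimate replaces your Perron--Frobenius comparison entirely and sidesteps the sign-pattern and branch-continuation difficulties that you correctly flag as the main obstacle in your approach.
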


Proofs are given in Section 4.

\section{Relation of the measures $\mu_{h,l}$ to known ones}

 {\it Translation invariant measures.}  (see e.g. \cite{BRZ}, \cite{Ge}, \cite{Pr}) Such measures correspond to $h_x\equiv h$, i.e. constant functions.   These measures are particular cases of our measures mentioned in Theorem \ref{t1} which can be obtained for $a=a_1-a_2=k$, i.e. $a_1=k$, $a_2=a_3=a_4=0$. In this case
 the condition (\ref{***}) reads
\begin{equation}\label{f}
h=kf_\theta(h).
\end{equation}
 The equation (\ref{f}) has a unique solution $h=0$, if $\theta\leq \theta_{\rm c}={1/ k}$ and three distinct solutions $h=0,\pm h_*$ ($h_*>0$), when $\theta >\theta_{\rm c}$.

Let us denote by $\mu_0$, $\mu_{\pm}$ the corresponding Gibbs measures and recall the following known results for the ferromagnetic Ising model ($\theta \geq 0$):
 \begin{itemize}
    \item[(1)] If $ \theta\leq \theta_{c}$,  $\mu_0$ is unique and extreme.

    \item[(2)] If $\theta > \theta_{c}$,   $\mu_-$ and $  \mu_+$, are extreme.

    \item[(3)]  $\mu_0$ is
extreme if and only if $ \theta < {1/\sqrt{k}}$.
    \end{itemize}

{\it ART construction.}

Let $h$ be a boundary condition satisfying (\ref{***}) on
$ \Gamma^{k_0}$.
 For $k\geq k_0+1$ define the following boundary condition on
 $ \Gamma^{k}$:
\begin{equation}\label{1}
\tilde{h}_x=\left\{\begin{array}{ll}
h_x, \ \ \mbox{if} \ \ x\in V^{k_0}\\[2mm]
0, \ \ \mbox{if} \ \ x\in V^k\setminus V^{k_0},\\
\end{array}\right.
\end{equation}
where $V^k$ denote the set of  vertices of $\Gamma^k$.
Namely,
 to each vertices of $V^{k_0}$ one adds $k-k_0$ successors with
vanishing value of the boundary condition.
It is obvious the b.c. $\tilde{h}$ satisfy the compatibility condition (\ref{***}).
In this way one constructs a new set of Gibbs measures that are
extreme in the range  $1/k_0 < \theta < 1/ \sqrt{k}$ (see \cite{Ak} for details).

In case $h$ is translation invariant on $\Gamma^{k_0}$ then the corresponding measures
of this construction can be obtained by Theorem \ref{t1} for $a_1=k_0$, $a_2=b_1=b_2=0$, $a_3+a_4=k-k_0$ and $l=0$.
(See Fig.\ref{f2} for an example.)

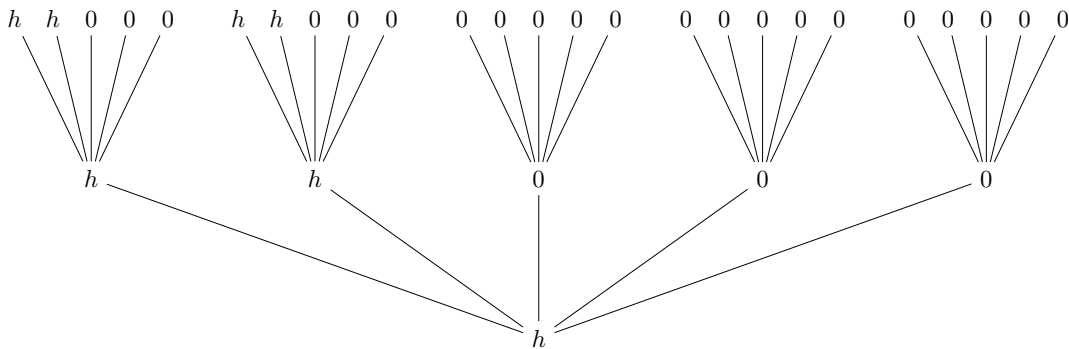
\begin{figure}
\scalebox{.85}{
\begin{tikzpicture}[level distance=2.5cm,
level 1/.style={sibling distance=3.5cm},
level 2/.style={sibling distance=.6cm}]
\node {$h$} [grow'=up]
	child {node {$h$}
		child foreach \name in {h,h,0,0,0} { node {$\name$} }
	}
	child {node {$h$}
		child foreach \name in {h,h,0,0,0} { node {$\name$} }
	}
	child {node {$0$} 
		child foreach \name in {0,0,0,0,0} { node {$\name$} }
	}
	child {node {$0$}
		child foreach \name in {0,0,0,0,0} { node {$\name$} }
	}
	child {node {$0$}
		child foreach \name in {0,0,0,0,0} { node {$\name$} }
	}
;
\end{tikzpicture}
}
\caption{\footnotesize \noindent
This is an example of the function $h_x$ on the vertices of the Cayley tree of order 5.
This is the case when $a_1=2$ $a_3+a_4=3$, $a_2=b_1=b_2=0$, $l=0$. The picture shows all two possible rules to put values
of function $h_\cdot$ on the set $S(x)$, i.e., on n.n. of $h$ one puts two $h$ and three zeros, but on n.n of $0$ one puts only zeros.}\label{f2}
\end{figure}
But in case when $h$ is not translation invariant, measures of ART do not coincide with measures of Theorem \ref{t1}.

{\it Bleher-Ganikhodjaev construction.}
Consider an infinite path $\pi=\{x^0=x_0<x_1<\dots\}$ on the half Cayley tree  (the notation $x<y$ meaning that  pathes from the root to $y$ go through $x$).
Associate  to this  path
a collection $h^\pi$ of numbers
 given  by the condition
\begin{equation}\label{b3.1}
h_x^\pi=\left\{\begin{array}{ll}
-h_*, \ \ \mbox{if} \ \ x\prec x_n, \, x\in W_n,\\[2mm]
h_*, \ \ \mbox{if} \ \ x_n\prec x, \, x\in W_n,\\[2mm]
h_{x_n}, \ \ \mbox{if} \ \ x=x_n.
\end{array}\right.
\end{equation}
$n=1,2,\dots$ where $x\prec x_n$ (resp. $x_n\prec x$) means that $x$ is on the left (resp. right) from the path $\pi$
and $h_{x_n}\in [-h_*, h_*]$ are arbitrary numbers.
For any infinite path $\pi$, the collection of numbers $h^\pi$ satisfying relations (\ref{***})
exists and is unique (see \cite{BG}).

A real number
$t=t(\pi)$, $0\leq t\leq 1$ can be assigned to the infinite path and the set $h^{\pi(t)}$
is uniquely defined.
The  set of numbers $h^{\pi(t)}$ being distinct for different $t\in [0,1]$, it is also the case for the
 corresponding Gibbs  measures. One thus obtains uncountable many  Gibbs measures and they are extreme.
For each fixed $t$ the ground state configuration of such measure contains a unique interface path $\pi(t)$.
Using Theorem \ref{t1} we can construct new class of measures which has infinitely many interface paths.
Let us give these measures precisely:

Let $k\geq 2, a_1\geq 2$  such that $k-a_1$ an even positive integer.
In Theorem \ref{t1} take $l=h$ and $a_2+a_4=a_3$, $b_i=a_i$, $i=1,2,3,4$.
Then corresponding $h_x$ has two values $\pm h$ such that if $h_x=h$ (resp. $-h$) then on $S(x)$ the
number of $h$ (resp. $-h$)
is $a_1+a_3$ and number of $-h$ (resp. $h$) is $a_2+a_4$.
In this case the non-uniqueness condition of Theorem \ref{t1} reduced to
$|\theta|>{1\over a_1}$ under this condition there are two distinct measures
$\mu_{h,-h}$ and $\mu_{-h,h}$. It is easy to see that each such measure has
infinitely many interface pathes which have started point in each level of the tree.
(See Fig. \ref{f3})
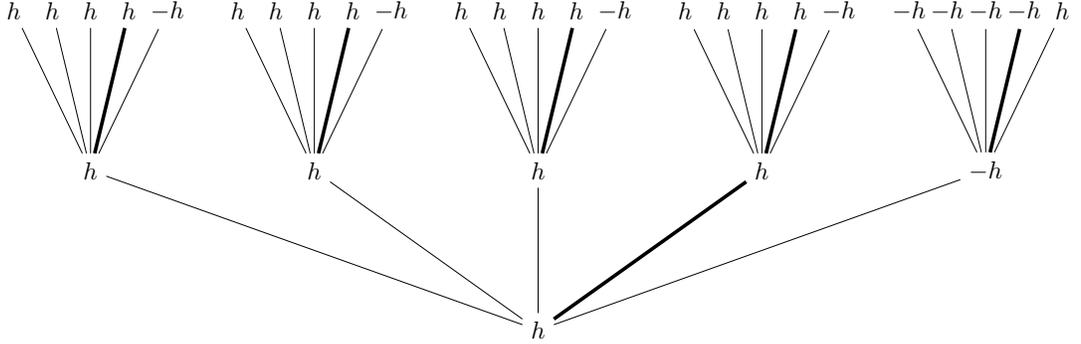
\begin{figure}
\scalebox{.85}{
\begin{tikzpicture}[level distance=2.5cm,
level 1/.style={sibling distance=3.5cm},
level 2/.style={sibling distance=.6cm}]
\node {$h$} [grow'=up]
	child {node {$h$}
		child foreach \name in {h,h,h} { node {$\name$} }
		child {node {$h$} edge from parent[ultra thick]}
		child {node {$-h$}}
	}
	child {node {$h$}
		child foreach \name in {h,h,h} { node {$\name$} }
		child {node {$h$} edge from parent[ultra thick]}
		child {node {$-h$}}
	}
	child {node {$h$} 
		child foreach \name in {h,h,h} { node {$\name$} }
		child {node {$h$} edge from parent[ultra thick]}
		child {node {$-h$}}
	}
	child {node {$h$} edge from parent[ultra thick]
		child foreach \name in {h,h,h} { node {$\name$} edge from
		parent[thin]}
		child {node {$h$} edge from parent[ultra thick]}
		child {node {$-h$} edge from parent[thin]}
	}
	child {node {$-h$}
		child foreach \name in {-h,-h,-h} { node {$\name$} }
		child {node {$-h$} edge from parent[ultra thick]}
		child {node {$h$}}
	}
;
\end{tikzpicture}
}
\caption{\footnotesize \noindent
This is an example of the function $h_x$ on the vertices of the Cayley tree of order 5. Here
$a_1=4$, $a_2=1$, $a_3=a_4=0$. One can take the same function for $a_1=3$, $a_2=0$,  $a_3=a_4=1$, $b_i=a_i$,
$i=1,2,3,4$ and $l=h$. The bold pathes are going to infinity and each is an interphase path (separating "+" and "-" values).}\label{f3}
\end{figure}

We note that such measures (i.e. with infinitely many interfaces of corresponding ground states)
were constructed in \cite{GRS} too. But the methods of \cite{GRS} are different from solving of an equation
with respect to $h_x$. It is known that (see Theorem 12.6 of \cite{Ge}) to each extreme Gibbs measure
corresponds a solution $h_x$ of (\ref{***}). Our Theorem \ref{t1} gives explicitly the solutions corresponding
 to the measures with an infinite pathes of interface. Such solutions corresponding to the extreme measures of \cite{GRS} are not known yet.

{\it Periodic Gibbs measures.}
Let $G_k$ be a free product of $k + 1$ cyclic groups of the second order with generators $a_1, a_2,\dots, a_{k+1}$,
respectively.

It is known that there exists an one-to-one correspondence between the set of vertices $V$ of the
Cayley tree $\Gamma^k$ and the group $G_k$.

\begin{defn} Let ${\tilde G}$ be a normal subgroup of the group $G_k$. The set $h = \{h_x: x\in G_k\}$
 is said to be ${\tilde G}$-periodic if $h_{yx} =h_x$ for any $x\in G_k$ and $y\in {\tilde G}$.
 \end{defn}

Let

$$G^{(2)}_k = \{x\in G_k: \, \mbox{the length of word} \, x \, \mbox{is even}\}.$$
 Note that $G^{(2)}_k$ is  the  set of even vertices (i.e. with even distance to the root). Consider the boundary conditions
 ${h}^{\pm}$ and ${h}^{\mp}$:

\begin{equation}\label{PBC}
{h}_x^{\pm}=- {h}_x^{\mp} =
\left\{\begin{array}{ll}
h_*, \ \ \mbox{if} \ \ x\in G^{(2)}_k\\[2mm]
-h_*, \ \ \mbox{if} \ \ x\in G_k\setminus G^{(2)}_k,\\
\end{array}\right.
\end{equation}
and denote by $\mu^{(\mp)}, \mu^{(\pm)}$ the corresponding Gibbs measures.

The $\tilde{G}$- periodic solutions of equation (\ref{***}) are either translation-invariant ($G_k$-periodic) or $G^{(2)}_k$-periodic (see \cite{GR}), they are solutions to

\begin{equation}\label{ff}
u=kf_\theta(v), \ \ v=kf_\theta(u).
\end{equation}

In the ferromagnetic case only translation invariant b.c. can be found.
In the antiferromagnetic  case ($\theta \leq 0$) the system (\ref{ff}) has
 a unique solution $h=0$ if $ \theta\geq -1/k$, and three distinct solutions $h= 0$, ${h}^{\pm}$ and ${h}^{\mp}$ if
 $ \theta < -1/k$.

    Let us also recall that for the antiferromagnetic Ising model:
    \begin{itemize}
    \item[(1)] If $ \theta  \geq -1/k$,  $\mu_0$ is unique and extreme.

    \item[(2)] If $\theta <- 1/k$,   $\mu^{(\pm)}$ and $  \mu^{(\mp)}$, are extreme.
    \end{itemize}
 see \cite{Ge}, \cite{R}.

We note that these measures are particular cases of measures of Theorem \ref{t1} which can be obtained
for $a_1=0, a_2=k$, i.e $a_3=a_4=0$. (See Fig. \ref{f4}, for $k=3$).
\begin{figure}
\scalebox{1}{
\begin{tikzpicture}[level distance=2.5cm,
level 1/.style={sibling distance=4.5cm},
level 2/.style={sibling distance=1.7cm},
level 3/.style={sibling distance=.6cm}]
\node {$-h$} [grow'=up]
	child {node {$h$}
		child {node {$-h$}
			child {node {$h$}}
			child {node {$h$}}
			child {node {$h$}}
		}
		child {node {$-h$}
			child {node {$h$}}
			child {node {$h$}}
			child {node {$h$}}
		}
		child {node {$-h$}
			child {node {$h$}}
			child {node {$h$}}
			child {node {$h$}}
		}
	}
	child {node {$h$}
		child[sibling distance=.7cm] {node {$-h$}
			child[sibling distance=.6cm] foreach \name in {h,h,h} { node
		{$\name$} }
		}
		child[sibling distance=.7cm] {node {$-h$}}
		child[sibling distance=.7cm] {node {$-h$}}
	}
	child[sibling distance=2.5cm] {node {$h$}
		child[sibling distance=.6cm] foreach \name in {-h,-h,-h} { node
		{$\name$} }
	}
;
\end{tikzpicture}
}
\caption{\footnotesize \noindent
This is an example of two-periodic function $h_x$ on the vertices of the Cayley tree of order 3. Here
$a_1=0$, $a_2=k=3$, $a_3=a_4=0$.}\label{f4}
\end{figure}
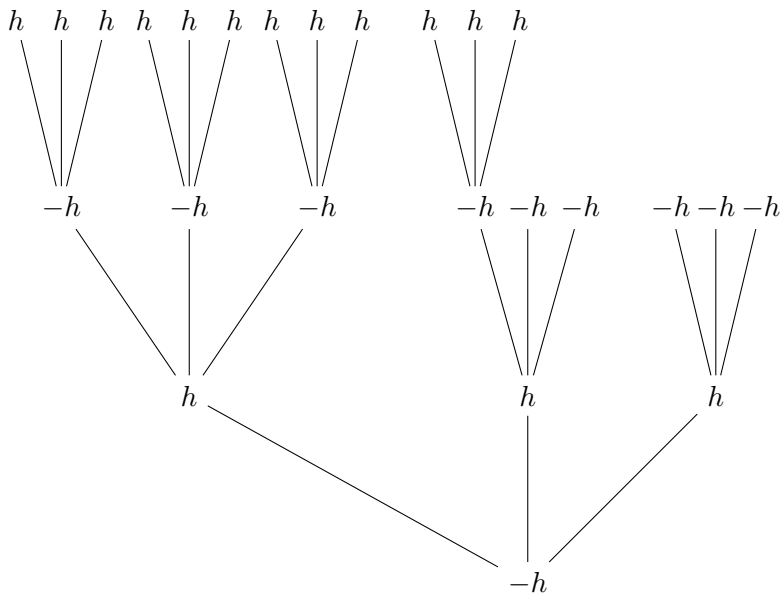

{\it Weakly periodic Gibbs measures.} Following \cite{RR}, \cite{RR1} recall notion of weakly
periodic Gibbs measures.

Let $G_k/\widehat{G}_k=\{H_1,...,H_r\}$  be a factor group, where
$\widehat{G}_k$ is a normal subgroup of index $r\geq 1$.

\begin{defn}\label{wp} A set  $h=\{h_x,x\in G_k\}$ is called
 $\widehat{G}_k$ - \textit{weakly periodic}, if
$h_x=h_{ij}$, for any $x\in H_i, x_{\downarrow}\in H_j$, where $x_{\downarrow}$ denotes the ancestor of $x$.
\end{defn}

Weakly periodic b.c.  $h$ coincide with periodic ones  if
 $h_x$ is independent of $x_{\downarrow}$.

We recall results known for the cases of index two. Note that any such subgroup has the form
\begin{equation}
\label{HA}
H_A=\left\{x\in G_k:\sum\limits_{i\in A}\omega_x(a_i) \ \ {\rm is \, even} \right\},
\end{equation}
where $\emptyset \neq A\subseteq N_k=\{1,2,\dots,k+1\}$, and $\omega_x(a_i)$ is the
number of $a_i$ in a word $x\in G_k$.
We consider  $A\ne N_k$: when  $A = N_k$ weak periodicity coincides
with standard periodicity.

 Let
$G_k/H_A=\{H_0, H_1\}$ be the  factor group, where  $H_0=H_A,
H_1=G_k\setminus H_A$.
 Then, in view of (\ref{***}), the
$H_A$-weakly periodic b.c.   has the form
\begin{equation}\label{wp5}
h_x=\left\{%
\begin{array}{ll}
    h_{1}, & {x \in H_0, \ x_{\downarrow} \in H_0}, \\[2mm]
    h_{2}, & {x \in H_0, \ x_{\downarrow} \in H_1}, \\[2mm]
    h_{3}, & {x \in H_1, \ x_{\downarrow} \in H_0}, \\[2mm]
    h_{4}, & { x \in H_1, \ x_{\downarrow}  \in H_1,}
\end{array}%
\right.\end{equation}
where  the $h_{i}$ satisfy the following equations:
\begin{equation}\label{wp6}
\left\{%
\begin{array}{ll}
    h_{1}=|A|f_\theta(h_{3})+(k-|A|)f_\theta(h_{1}),\\[2mm]
    h_{2}=(|A|-1)f_\theta(h_{3})+(k+1-|A|)f_\theta(h_{1}),\\[2mm]
    h_{3}=(|A|-1)f_\theta(h_{2})+(k+1-|A|)f_\theta(h_{4}),\\[2mm]
    h_{4}=|A|f_\theta(h_{2})+(k-|A|)f_\theta(h_{4}).
\end{array}%
\right.\end{equation}
It is obvious that the following sets are invariant with respect to the operator $W:\mathbb R^4\to \mathbb R^4$ defined by RHS of (\ref{wp6}):
$$ I_1 =\{h\in \mathbb R^4: h_1=h_2=h_3=h_4\}, \ \ I_2 =\{h\in \mathbb R^4:
h_1=h_4; h_2=h_3\},$$
$$
I_3 =\{h\in \mathbb R^4: h_1=-h_4; h_2=-h_3\}.
$$
It is obvious to see that
\begin{itemize}
\item[-] measures corresponding to solutions on $I_1$ are translation invariant, i.e particular cases of the measures given in Theorem \ref{t1}.

\item[-] measures corresponding to solutions on $I_2$ are weakly periodic, which coincide with the measures
given in Theorem \ref{t1} for $a_1=k-|A|$, $a_2=0$, $a_3=|A|$, $a_4=0$, $b_1=k+1-|A|$, $b_2=0$, $b_3=|A|-1$, $b_4=0$.

\item[-] measures corresponding to solutions on $I_3$ are weakly periodic, which coincide with the measures
given in Theorem \ref{t1} for $a_1=k-|A|$, $a_2=0$, $a_3=0$, $a_4=|A|$, $b_1=k+1-|A|$, $b_2=0$, $b_3=0$, $b_4=|A|-1$.
(see Fig. \ref{f5})
\end{itemize}

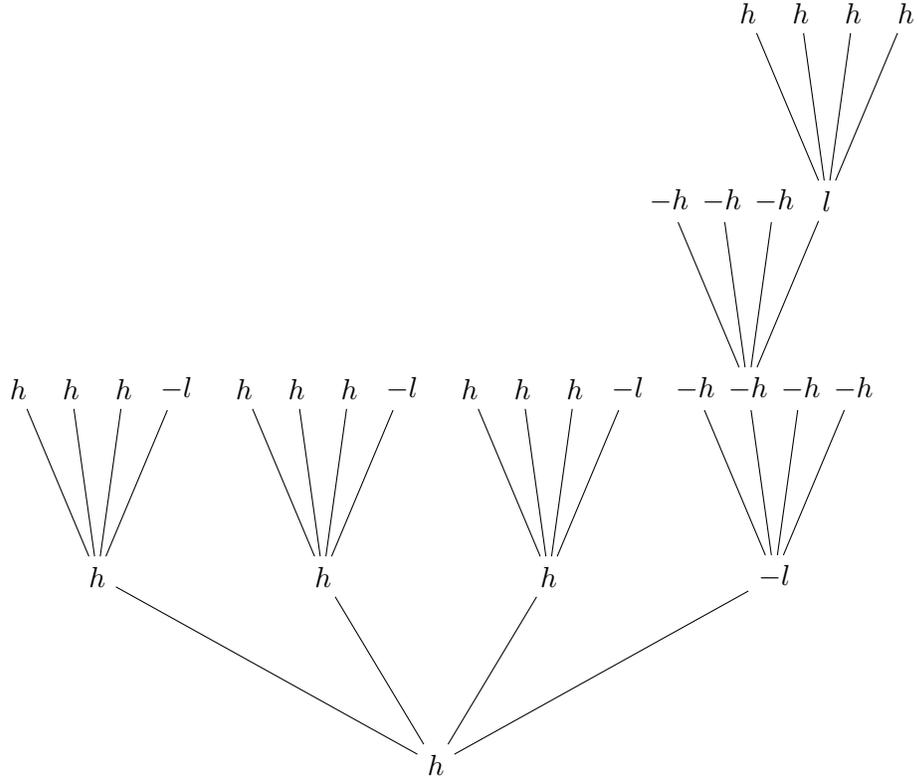
\begin{figure}
\scalebox{1}{
\begin{tikzpicture}[level distance=2.5cm,
level 1/.style={sibling distance=3cm},
level 2/.style={sibling distance=.7cm},
level 3/.style={sibling distance=.7cm},
level 4/.style={sibling distance=.7cm}]
\node {$h$} [grow'=up]
	child {node {$h$}
		child foreach \name in {h,h,h,-l} { node
		{$\name$} }
	}
	child {node {$h$}
		child foreach \name in {h,h,h,-l} { node
		{$\name$} }
	}
	child {node {$h$}
		child foreach \name in {h,h,h,-l} { node
		{$\name$} }
	}
	child {node {$-l$}
		child {node {$-h$}}
		child {node {$-h$}
			child {node {$-h$}}
			child {node {$-h$}}
			child {node {$-h$}}
			child {node {$l$}
				child foreach \name in {h,h,h,h} { node {$\name$} }
			}
		}
		child {node {$-h$}}
		child {node {$-h$}}
	}
;
\end{tikzpicture}
}
\caption{\footnotesize \noindent
This is an example of weakly-periodic function $h_x$ on the vertices of the Cayley tree of order 4. Here
$|A|=1$, $a_1=3$, $a_2=a_3=0$, $a_4=1$, $b_1=k=4$, $b_2=b_3=b_4=0$.}\label{f5}
\end{figure}

Moreover, the system (\ref{wp6}) was solved only in cases $|A|=1$ and $|A|=k$ (see \cite{RR}, \cite{RR1}). Thus Theorem \ref{t1} gives, in particular, new weakly
periodic measures.

\begin{rk} It remain known Gibbs measures called: Zachary measures
(see e.g. part (b) of Theorem 12.31 in \cite{Ge}),
Higuchi's non-translation-invariant
measures (see \cite{Hi}),  Alternating Gibbs measures
(see \cite{GHRR}) and weakly periodic measures for subgroups of index 4 (see Chapter 2 of \cite{R} for details). All these measures correspond to functions $h_x$ with
more than 4 distinct values. Thus these measures are different from the measures mentioned in Theorem \ref{t1}.
\end{rk}

\section{Proofs}

\subsection{Proof of Theorem \ref{t1}}
\begin{proof}
To prove theorem we shall use the following (simply checked)  properties of
the function $f_\theta(x)$.
\begin{lemma}\label{l1} The function $f_\theta$ has the following properties:
\begin{itemize}
\item[1.] $f_\theta(-x)=-f_\theta(x)$, i.e., it is odd function of $x$;\\

\item[2.] $f_{-\theta}(x)=-f_\theta(x)$, i.e., it is odd function of $\theta$;\\

\item[3.] $\lim_{x\to\infty}f_\theta(x)={\rm arctanh}(\theta)$;\\

\item[4.] ${d\over dx}f_\theta(0)=\theta, \ \ 0<{d\over dx}f_\theta(x)\leq\theta$, $\theta>0$;\\

\item[5.] ${d^2\over dx^2}f_\theta(x)<0, \ \ x>0, \ \ \theta>0.$\\

\item[6.] The equation $h=mf_\theta(h)$ (where $m\geq 1$, $\theta\in (-1,1)$) has unique solution $h=0$,
if $-1<\theta\leq {1\over m}$,
and three solutions $h=0,\pm h_*, \, h_*>0$, if ${1\over m}<\theta<1$.
 \end{itemize}
 \end{lemma}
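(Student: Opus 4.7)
The plan is to treat the six items in order, since items 1--5 are direct calculus computations on $f_\theta(x)=\operatorname{arctanh}(\theta\tanh x)$ and the fixed-point statement in item 6 is a standard consequence of the derivative and convexity information from items 3--5.

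First, items 1 and 2 follow instantly from the fact that both $\tanh$ and $\operatorname{arctanh}$ are odd functions of their arguments, and that $\tanh(\theta\tanh x)$ depends on $\theta$ only through $\theta\tanh x$. Item 3 follows from $\tanh x\to 1$ as $x\to\infty$ together with continuity of $\operatorname{arctanh}$ on $(-1,1)$, using that $|\theta|<1$.

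Next I would compute once and for all
\begin{equation*}
f_\theta'(x)=\frac{\theta(1-\tanh^2 x)}{1-\theta^2\tanh^2 x}.
\end{equation*}
Setting $t=\tanh^2 x\in[0,1)$ and $\varphi(t)=(1-t)/(1-\theta^2 t)$, a one-line quotient differentiation gives $\varphi'(t)=(\theta^2-1)/(1-\theta^2 t)^2<0$, so $\varphi$ is strictly decreasing from $\varphi(0)=1$. This yields $f_\theta'(0)=\theta$ and, for $\theta>0$, the sandwich $0<f_\theta'(x)\le\theta$, proving item 4. For item 5, note $t=\tanh^2 x$ is strictly increasing in $x>0$ and $\theta\varphi(t)$ is strictly decreasing in $t$; composing, $f_\theta'(x)=\theta\varphi(\tanh^2 x)$ is strictly decreasing in $x>0$, hence $f_\theta''(x)<0$. (Alternatively one can just differentiate the expression above.)

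For item 6, define $g(h)=mf_\theta(h)-h$. By items 1 and 5, $mf_\theta$ is odd and strictly concave on $[0,\infty)$, with $g'(0)=m\theta-1$ and $g(h)\to m\operatorname{arctanh}(\theta)-\infty=-\infty$ as $h\to\infty$. If $m\theta\le 1$, strict concavity combined with $f_\theta'(x)<\theta$ for $x>0$ (item 4) gives $mf_\theta(h)<m\theta h\le h$ for all $h>0$, so $g<0$ on $(0,\infty)$ and by oddness $g>0$ on $(-\infty,0)$; thus $h=0$ is the only root. If $m\theta>1$, then $g'(0)>0$ makes $g$ positive on a right neighbourhood of $0$, while $g(h)\to-\infty$, so by the intermediate value theorem there exists $h_*>0$ with $g(h_*)=0$; strict concavity of $g$ on $(0,\infty)$ forces this positive root to be unique, and oddness produces the symmetric root $-h_*$, giving exactly the three solutions claimed.

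The only mildly delicate step is the boundary case $m\theta=1$ in item 6, where one must use strict concavity (item 5) to exclude a nonzero fixed point; the rest is routine bookkeeping of signs and limits.
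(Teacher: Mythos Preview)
The paper does not actually prove this lemma: it is introduced with the phrase ``the following (simply checked) properties'' and no argument is given. Your write-up supplies exactly the routine verification the authors omit, and items 1--5 are handled correctly.

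One small gap in your treatment of item 6: your uniqueness argument for the case $m\theta\le 1$ invokes items 4 and 5, both of which are stated only for $\theta>0$. When $\theta\le 0$ those items are not available in the form you use (indeed for $\theta<0$ the function $f_\theta$ is convex, not concave, on $(0,\infty)$ and $f_\theta'(x)\ge\theta$ rather than $\le\theta$). The fix is trivial---for $\theta\le 0$ one has $f_\theta(h)\le 0<h/m$ for every $h>0$, so $g(h)<0$ on $(0,\infty)$ and oddness finishes---but you should say so explicitly. With that one-line addition the proof is complete.
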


If $a=b=c=d=0$ then the system (\ref{hl}) has unique solution $(h,l)=(0,0)$.
Thus to have non-zero solution it is necessary to have a non-zero parameter, i.e. $a^2+b^2+c^2+d^2>0$.
Since $h$ and $l$ play symmetric role (up to rename of parameters), it suffices to consider the following cases:

1) $a=b=0$. In this case from system (\ref{hl}) we get
$h=0$ and $l=df_\theta(l)$. Part 2 of Lemma \ref{l1} allows us
 to assume $d\geq 0$ only (otherwise we can change $\theta$ by $-\theta$).
The case $d=0$ gives $l=0$. So it remains $d\in \{1,\dots,k\}$. By part 6 of Lemma \ref{l1} we
get (for $a=b=0$) the system (\ref{hl}) has

\begin{itemize}
\item unique solution $(0,0)$,
if $d=0$ or $-1<\theta\leq {1\over d}$;

\item three solutions $(0,0), (0,\pm l_*), \, l_*>0$, if ${1\over d}<\theta<1$, $d\ne 0$.
\end{itemize}
2) $a=0$, $b\ne 0$. In this case from the first equation of the system (\ref{hl}) we get
$h=bf_\theta(l)$. Then from the second equation we obtain
\begin{equation}\label{l}
l=g(l)\equiv cf_\theta(bf_\theta(l))+df_\theta(l).
\end{equation}
Using Lemma \ref{l1} one can see that
$g(0)=0$, $g(-l)=-g(l)$, $g'(0)=bc\theta^2+d\theta$ and $g(l)$ is a bounded function of $l$.
 Moreover, if $|g'(0)|>1$ (i.e. $0$ is unstable fixed point of $g$) then there is a sufficiently small neighborhood of $l=0$: $(-\varepsilon, +\varepsilon)$ such that $g(l)<l$, for $l\in (-\varepsilon, 0)$ and $g(l)>l$, for $l\in (0, +\varepsilon)$.
For $l\in (0,\varepsilon)$ the iterates $g^{(n)}(l)$ remain $>0$, monotonically increase and hence converge to a limit, $l_*\geq 0$ which solves
(\ref{l}). However, $l_*>0$ as $0$ is unstable. Then since $g$ is odd function of $l$,  $-l_*$ also solves
(\ref{l}). Thus

\begin{itemize}
\item If $|g'(0)|=|bc\theta^2+d\theta|>1$ then the system (\ref{hl})
has at least three solutions:
$$(0,0), \ \ (\pm bf_\theta(l_*),\ \pm l_*).$$
\end{itemize}

3) $a\ne 0$, $b=0$. In this case from the first equation of (\ref{hl})
we obtain $h=af_\theta(h)$. As above without loss of generality here
we assume that $a>0$. Then part 6 of Lemma \ref{l1} gives
that the last equation has up to three solutions: 0, $\pm h_*$.
The case $h=0$ reduces the second equation of (\ref{hl}) to
$l=df_\theta(l)$ which is also the equation of the form mentioned in the part 6 of Lemma \ref{l1}.
The cases $h=\pm h_*$ reduces the second equation to
\begin{equation}\label{l5}
 l=\pm cf_\theta(h_*)+df_\theta(l)=\pm {c\over a}h_*+df_\theta(l).
 \end{equation}
Analysis of solutions to this equation done in Lemma 12.27 of \cite{Ge}:
denote
$$\bar h=\max_{l\geq 0}[df_\theta(l)-l].$$
For $\theta>0$ and $d\geq 1$ the equation (\ref{l5}) has
\begin{itemize}
\item[(i)] a unique solution $l_*$ when $|{c\over a}h_*|>\bar h$ or $|{c\over a}h_*|=\bar h=0$,

\item[(ii)] two distinct solutions $l_-<l_+$ when $|{c\over a}h_*|=\bar h>0$, and

\item[(iii)] three distinct solutions $l_-<l_0<l_+$ when $|{c\over a}h_*|<\bar h$.
\end{itemize}
Summarizing we get following nine solutions:
\begin{itemize}
\item If $\theta>{1\over d}$ then there are three solutions to (\ref{hl}):
$$(0,0), \ \ (0,l_*), \ \ (0,-l_*)$$
For $h_*$ satisfying above-mentioned condition (iii) we have
six new solutions
$$(\pm h_*,\pm l_-), \ \ (\pm h_*,\pm l_0),  \ \ (\pm h_*,\pm l_+).$$
\end{itemize}

4) $ab\ne 0$. In this case from the first equation of (\ref{hl}) we get
$$f_\theta(l)={1\over b}(h-af_\theta(h)).$$
Using this from the second equation we obtain
$$l=cf_{\theta}(h)+{d\over b}(h-af_\theta(h))=\varphi(h)\equiv {1\over b}[(bc-ad)f_\theta(h)+dh].$$
Consequently, the first equation of (\ref{hl}) can be written as
\begin{equation}\label{H}
h=\psi(h)\equiv af_{\theta}(h)+bf_\theta\left({1\over b}[(bc-ad)f_\theta(h)+dh]\right).
\end{equation}
It is easy to see that $\psi(0)=0$ and similarly as case 2) one can show that
the equation (\ref{H}) has at least three solutions if
$|\psi'(0)|=|\theta||(bc-ad)\theta+a+d|>1$.
Thus the are at least three solutions to (\ref{hl}) of the form:
$$(h_i,\varphi(h_i)), \ \ i=1,2,3.$$\end{proof}

\subsection{Proof of Theorem \ref{t2}}
\begin{proof}
We use a result of \cite{MSW} to establish a bound for reconstruction insolvability corresponding to the
Gibbs measure $\mu_{h,l}$. Because, it is known that if $\mu$ is  a Gibbs measure of an associated spin
system, the fact that reconstruction is impossible for $\mu$ is equivalent to saying that $\mu$ is an
extremal Gibbs measure of the spin system.

Let us first give some necessary definitions from \cite{MSW}. For $k\geq 2$, let $\mathbb T^k$ denote a {\it half tree}, i.e., the infinite
rooted $k$-ary tree (in which every vertex has $k$ children).  Consider an
{\it initial finite complete subtree} $\mathcal T$, that  is a tree of the
following form: in the rooted tree $\mathbb T^k$, take all vertices at distance $\leq d$ from
the root, plus the edges joining them, where $d$ is a fixed constant. We identify subgraphs of $\mathcal T$ with their vertex sets and write $E(A)$ for the edges within a subset $A$ and $\partial A$ for the boundary of $A$, i.e., the neighbors of $A$ in $(\mathcal T\cup \partial\mathcal T)\setminus A$.

In \cite{MSW} the key ingredients are two quantities,
$\kappa$ and $\gamma$, which bound the probabilities of
percolation of disagreement down and up the tree, respectively. Both are properties of
the collection of Gibbs measures $\{\mu^\tau_{{\mathcal T}}\}$, where the boundary condition $\tau$
is fixed and $\mathcal T$ ranges over all initial finite complete subtrees of $\mathbb T^k$.
 For a given subtree $\mathcal T$ of $\mathbb T^k$ and a vertex $x\in\mathcal T$, we write $\mathcal T_x$ for
the (maximal) subtree of $\mathcal T$ rooted at $x$ that is a tree given by
$\mathcal T\cap \mathbb T^k_x$, with $\mathbb T_x^k$ the half tree with root $x$.
We draw the trees with the root at the
top and the leaves at the bottom.
When $x$ is not the root of $\mathcal T$, let $\mu_{\mathcal T_x}^s$
denote the (finite-volume) Gibbs measure in which the
parent of $x$ has its spin fixed to $s$ and the configuration on the bottom boundary  of ${\mathcal T}_x$
(i.e., on $\partial {\mathcal T}_x\setminus \{\mbox{parent\ \ of}\ \ x\}$) is
specified by $\tau$.

 For two measures $\mu_1$ and $\mu_2$ on $\Omega$, $\|\mu_1-\mu_2\|_x$ denotes the variation distance between the projections of $\mu_1$ and $\mu_2$ onto the spin at $x$, i.e.,
$$\|\mu_1-\mu_2\|_x={1\over 2}(|\mu_1(\sigma(x)=-1)-\mu_2(\sigma(x)=-1)|+|\mu_1(\sigma(x)=1)-\mu_2(\sigma(x)=1)|).$$

Denote by $\Omega^\tau_\mathcal T$
the set of configurations $\sigma$ given on $\mathcal T\cup \partial\mathcal T$ that agree with $\tau$ on $\partial \mathcal T$,
i.e., $\tau$ specifies
a boundary condition on $\mathcal T$. For any $\eta\in \Omega^\tau_\mathcal T$ and any subset
$A\subseteq \mathcal T$, the Gibbs distribution on $A$ conditional on the configuration outside $A$ being $\eta$ is
denoted by $\mu^\eta_A$.

Let $\eta^{x,s}$ be the
configuration $\eta$ with the spin at $x$ set to $s$.

 Following \cite[page 165] {MSW} define
\begin{equation}\label{ka}
\kappa\equiv \kappa(\mu)=\sup_{x\in\Gamma^k}\max_{s,s'}\|\mu^s_{{\mathcal T}_x}-\mu^{s'}_{{\mathcal T}_x}\|_x;
\end{equation}
$$\gamma\equiv\gamma(\mu)=\sup_{A\subset \mathbb T^k}\max\|\mu^{\eta^{y,s}}_A-\mu^{\eta^{y,s'}}_A\|_x,$$
where the supremum is taken over all subsets $A\subset \mathbb T^k$, the maximum is taken over all boundary conditions $\eta$, all sites $y\in \partial A$, all neighbors $x\in A$ of $y$, and all spins $s, s'\in \{-1, 1\}$.

As the main ingredient we apply \cite[Theorem 4.3]{MSW}, from which it follows that
the Gibbs measure $\mu$ is extreme if $k\kappa\gamma<1$.

To use the above-mentioned condition for the given choices of solutions to (\ref{***}) we have
to bound corresponding $\kappa$ and $\gamma$ and show that $k\kappa\gamma <1$.

For both $\kappa$ and $\gamma$, we need to bound a quantity of the form $\|\mu^\eta_A -\mu^{\eta^y}_A\|_z$, where
$y\in \partial A$ and $z\in A$ is a neighbor of $y$. The key observation of \cite{MSW} is that this quantity can be
expressed very cleanly in terms of the "magnetization" at $z$, i.e., the ratio of probabilities
of a $(-)$-spin and a $(+)$-spin at $z$. It will actually be convenient to work with the magnetization
without the influence of the neighbor $y$: let $\mu_A^{\eta,*}$ denote the Gibbs
distribution with boundary condition $\eta$, except that the spin at $y$ is free (or equivalently,
the edge connecting $z$ to $y$ is erased).

\begin{pro}\label{p1} Let $\mu$ be one of measures $\mu_{h,l}$.
For any subset $A\subseteq \mathcal T$, any boundary configuration $\eta$, any site $y\in \partial A$
and any neighbor $z\in A$ of $y$, we have
$$\|\mu^\eta_A -\mu^{\eta^y}_A\|_z=K_\beta(R_z),$$
where
$$R_z = {\mu_A^{\eta,*}(\sigma(z)=-)\over \mu_A^{\eta,*}(\sigma(z)=+)}=e^{-2h_z},$$
here $h_z$ is a compatible function constructed in Section 2 using $h$ and $l$ by steps (i)-(ii).
The function $K_\beta$ is defined by
$$K_\beta(a) ={1\over e^{-2\beta J} a+1}-{1\over e^{2\beta J}a+1}.$$
\end{pro}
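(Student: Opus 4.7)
The plan is to combine the tree's Markov property with the compatibility identity (\ref{***}) satisfied by $\{h_x\}$, following the strategy of \cite{MSW}. First, I would use the Markov property: removing the edge $\langle z,y\rangle$ disconnects the subtree $\mathcal T_z^A$ (the portion of $A$ on the $z$-side of $y$) from the rest of $A\cup\partial A$, so the $\sigma(z)$-marginal of $\mu_A^{\eta,*}$ is governed only by $\mathcal T_z^A$ together with the boundary values of $\eta$ on $\partial\mathcal T_z^A\setminus\{y\}$. Writing this marginal as $\mu_A^{\eta,*}(\sigma(z)=s)\propto e^{h_z^{\mathrm{eff}}s}$ defines an effective field $h_z^{\mathrm{eff}}$ at $z$, from which one reads off $R_z=e^{-2h_z^{\mathrm{eff}}}$ by inspection.

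Next -- and this is the heart of the argument -- I would identify $h_z^{\mathrm{eff}}$ with the value $h_z$ prescribed by the construction of Section~2. Integrating out the configuration on $\mathcal T_z^A$ layer by layer from the leaves toward $z$ shows that at every internal vertex $u$ one has
$$h_u^{\mathrm{eff}}=\sum_{v\in S(u)}f_\theta\bigl(h_v^{\mathrm{eff}}\bigr),$$
which is precisely the compatibility equation (\ref{***}). Since $\mu=\mu_{h,l}$ was built from a solution $(h,l)$ of (\ref{hl}) so that its global boundary field $\{h_x\}$ already satisfies (\ref{***}) with values in the four-state alphabet $\{\pm h,\pm l\}$ propagated by steps (i)--(ii), the two recursions coincide on $\mathcal T_z^A$. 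Hence $h_z^{\mathrm{eff}}=h_z$, which gives $R_z=e^{-2h_z}$.

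Finally, I would put the edge $\langle z,y\rangle$ back. Under $\mu_A^\eta$ the $\sigma(z)$-marginal picks up the Boltzmann factor $e^{\beta J\sigma(z)\eta(y)}$, which yields
$$\mu_A^\eta(\sigma(z)=+)=\frac{1}{1+e^{-2\beta J\eta(y)}R_z}.$$
Taking the difference between $\eta(y)=+1$ and $\eta^y(y)=-1$, and using $\mu_A^\eta(+)+\mu_A^\eta(-)=1$ so that the variation distance at $z$ equals $|\mu_A^\eta(\sigma(z)=+)-\mu_A^{\eta^y}(\sigma(z)=+)|$, simplifies the expression directly to $K_\beta(R_z)$. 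The main obstacle is the second step: cleanly justifying that the downward iteration of (\ref{***}) inside $\mathcal T_z^A$ reproduces exactly the prescribed value $h_z$, making essential use of the compatibility built into the four-state construction of Section~2. The algebra in the last step is entirely routine.
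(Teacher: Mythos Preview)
Your proposal follows exactly the route the paper takes: the paper's own proof is the single sentence ``The proof is similar to the proof of \cite[Proposition~4.2]{MSW},'' and what you have written is precisely a fleshed-out version of that argument (Markov decoupling across $\langle z,y\rangle$, identification of the cavity ratio, then the two-term algebra producing $K_\beta$).

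One caveat worth recording. The first equality $\|\mu_A^\eta-\mu_A^{\eta^y}\|_z=K_\beta(R_z)$, with $R_z$ the ratio of marginals under $\mu_A^{\eta,*}$, is the general MSW identity and holds for arbitrary $A$ and $\eta$; your third step derives it correctly. The second equality $R_z=e^{-2h_z}$ is a different matter: your step~two matches the leaf-to-root recursion with (\ref{***}), but that matching tacitly assumes the input at the leaves of $\mathcal T_z^A$ is supplied by the compatible field $\{h_x\}$, not by an arbitrary spin configuration $\eta$. For a generic $\eta$ the ratio $R_z$ genuinely depends on $\eta$ and need not equal $e^{-2h_z}$. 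The paper is equally silent on this point and in fact only ever invokes the identity when bounding $\kappa$, i.e.\ for $A=\mathcal T_x$ with the bottom boundary coming from the measure $\mu_{h,l}$ itself, where the compatibility argument does go through. So your outline is adequate for the use the paper actually makes of the proposition; just be aware that the ``any $\eta$'' clause in the statement is broader than what either you or the paper's deferral to \cite{MSW} strictly establishes.
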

\begin{proof}  The prove is similar to the proof of \cite[Proposition 4.2]{MSW}.
\end{proof}

Note that $R_z\in [0,+\infty)$. It is easy to check that $K_\beta(a)$ is an increasing function in the interval $[0, 1]$,
decreasing in the interval $[1, +\infty]$, and is maximized at $a = 1$. Therefore, we can always
bound $\kappa$ and $\gamma$ from above by $K_\beta(1) =\theta=\tanh(\beta J)$.
Thus the bounds of $\kappa$ and $\gamma$ can be controlled by the magnetization $R_z$.
The bound will be better than $\theta$ when $R_z$ differs
from 1 for any $z$.

To prove part 1 of Theorem \ref{t2} we use estimates $\gamma\leq \theta$ and $\kappa\leq \theta$ because $hl=0$
gives that $R_z=1$ for some $z\in V$. Thus condition $k\gamma \kappa<1$ gives $k\theta^2<1$ and the part 1 follows.

Now we shall prove the part 2.
For the Gibbs measure $\mu_{h,l}$ corresponding to a solution $(h,l)$ of (\ref{hl}) we denote
$$\mathcal H^{\pm}=\{x\in V: h_z=\pm h\}.$$
$$\mathcal L^{\pm}=\{x\in V: h_z=\pm l\}.$$
$$\alpha=e^{-\beta J}, \ \ A=e^{2h}, \ \ C=e^{2l}.$$
$$F(x)={\alpha+x\over 1+\alpha x}.$$
Then $R_z$ corresponding to $\mu_{h,l}$ has the following form
$$R_z=\left\{\begin{array}{lll}
A, \ \ z\in \mathcal H^+\\[2mm]
1/A, \ \ z\in \mathcal H^-\\[2mm]
C, \ \ z\in \mathcal L^+\\[2mm]
1/C, \ \ z\in \mathcal L^-,
\end{array}\right.$$
where $A\ne 1$, $C\ne 1$ (since $h\ne 0$ and $l\ne 0$) and satisfy the following system of equations
\begin{equation}\label{bu}
\begin{array}{lll}
A=[F(A)]^{a_1-a_2}[F(C)]^{a_3-a_4}, \\[2mm]
C=[F(A)]^{b_1-b_2}[F(C)]^{b_3-b_4}.
\end{array}
\end{equation}

 To check the extremality condition $k\kappa\gamma<1$ for $\mu_{h,l}$
 we use estimation $\gamma<\theta$. To bound $\kappa$ we use that
$R_z$ has values $A, 1/A, C, 1/C$.
Thus we have
$$\kappa\leq \max\{K_\beta(s): s\in\{A, 1/A, C, 1/C\}\}.$$
Without loss of generality we take  $K_\beta(A)=\max\{K_\beta(s): s\in\{A, 1/A, C, 1/C\}\}$,
because $A$ and $C$ play similar role.
We shall use the following formula (see Lemma 4.3 of \cite{MSW}):
$$K_\beta(A)={1\over k}\cdot {A\over J(A)}\cdot J'(A),$$
where $J(x)=(F(x))^k$. Note that under condition $|\theta||(bc-ad)\theta+a+d|>1$
the solution $h={1\over 2}\ln A$ of (\ref{H}) is an attracting (stable) fixed point for $\psi$.
Moreover, it is known that $h\leq h^*$ (see proof of part 1) of Theorem 12.31 in \cite{Ge}),
where $h^*>0$ is solution to $h=kf_\theta(h)$ (for $\theta>{1\over k}$). Note that $e^{2h^*}$ is
an attractive fixed point of $J(x)$, i.e. $J'(e^{2h^*})<1$. Since $0<A=e^{2h}\leq e^{2h^*}$
we have $J(A)\geq A$ and $J'(A)\leq 1$ for $\theta>{1\over k}$.
Consequently, we get
$$\kappa\leq K_\beta(A)={1\over k}\cdot {A\over J(A)}\cdot J'(A)\leq {1\over k}.$$
Hence
$k\gamma \kappa\leq \theta<1.$
\end{proof}

\section*{ Acknowledgements}

U.A. Rozikov thanks Professor Y.Velenik and the Section de Math\'ematiques
Universit\'e de Gen\'eve for financial support and kind hospitality.

\end{document}